\newtheorem{theorem}{Theorem}[section]
\newtheorem{lem}[theorem]{Lemma}
\newtheorem{cor}[theorem]{Corollary}
\theoremstyle{definition}
\newtheorem{defn}[theorem]{Definition}
\newtheorem{exmp}[theorem]{Example}
\theoremstyle{remark}
\newtheorem{rem}[theorem]{Remark}
\begin{document}
\begin{frontmatter}

\title{Detecting Simultaneous Integer Relations for Several Real Vectors}

 \author[label1,label2,label4]{Jingwei Chen}\ead{velen.chan@163.com}

\author[label1,label2]{Yong Feng \corref{cor}
}\ead{yongfeng@casit.ac.cn}
 \cortext[cor]{Corresponding author. }
\author[label1,label2,label4]{Xiaolin Qin}
\author[label1,label2]{Jingzhong Zhang}

\address[label1]{
UESTC-CICACAS Joint Laboratory of Automated Reasoning, University of
Electronic Science and Technology of China, Chengdu 610054}

\address[label2]{Lab. for Automated Reasoning and
      Programming, Chengdu Institute of Computer Applications, CAS, Chengdu 610041, China}

\address[label4]{Graduate University, Chinese Academy of Sciences, Beijing 100049,
China}

\begin{abstract}
An algorithm which either finds an nonzero integer vector ${\mathbf
m}$ for given $t$ real $n$-dimensional vectors ${\mathbf
x}_1,\cdots,{\mathbf x}_t$ such that ${\mathbf x}_i^T{\mathbf m}=0$
or proves that no such integer vector with norm less than a given
bound exists is presented in this paper. The cost of the algorithm
is at most ${\mathcal O}(n^4 + n^3 \log \lambda(X))$ exact
arithmetic operations in dimension $n$ and the least Euclidean norm
$\lambda(X)$ of such integer vectors. It matches the best complexity
upper bound known for this problem. Experimental data show that the
algorithm is better than an already existing algorithm in the
literature. In application, the algorithm is used to get a complete
method for finding the minimal polynomial of an unknown complex
algebraic number from its approximation, which runs even faster than
the corresponding \emph{Maple} built-in function.
\end{abstract}

\begin{keyword}
integer relation \sep PSLQ \sep HJLS \sep algebraic number \sep
minimal polynomial
\end{keyword}

\end{frontmatter}

\section{Introduction}
\label{sec:introduction}

Given a real vector ${\mathbf x}=(x_1,\cdots,x_n)^T\in{\mathbb
R}^n$, say a nonzero vector ${\mathbf
m}=(m_1,\cdots,m_n)^T\in{\mathbb Z}^n$ is an \emph{integer relation}
for $\mathbf{x}$ if ${\mathbf x}^T{\mathbf m}=\sum_{i=1}^nx_im_i=0$.
How to detect an integer relation for a given real vector is an old
problem. This is solved, for instance, by the PSLQ algorithm
\cite{FBA1999} that together with related lattice reduction schemes
such as LLL \cite{LLL1982}, was named one of ten ``algorithms of the
twentieth century'' by the publication \emph{Computing in Science
and Engineering} (see \cite{DS2000}). This paper considers a
generalization of the problem.  Let
 ${\mathbf x}_1, \cdots, {\mathbf x}_t$ be $t$ vectors in ${\mathbb R}^n$,
and denote $({\mathbf x}_1, \cdots, {\mathbf x}_t)$ by $X$. A
\emph{simultaneous integer relation} (SIR)  for ${\mathbf x}_1,
\cdots, {\mathbf x}_t$ is a vector ${\mathbf m} \in {\mathbb
Z}^{n}\setminus\{{\mathbf 0}\}$ such that $X^T{\mathbf m} = {\mathbf
0}$, i.e. ${\mathbf x}_i^T\mathbf{m} = 0$ for $i = 1,\cdots, t$. For
short, we also call ${\mathbf m}$  an SIR for $X$. An algorithm
which either finds an SIR  for $t$ real $n$-dimensional vectors or
proves that no SIR with norm less than a given bound
 exists is presented in this paper.

When $t=1$, the problem of detecting integer relations for one
rational or real vector is quite old. For two numbers $(a_1, a_2)$,
the venerable Euclidean algorithm does the job by computing the
ordinary continued fraction expansion of the real number $a_1/a_2$.
For $n \geq 3$, many detecting algorithms under the names
generalized Euclidean algorithm and multidimensional continued
fraction algorithm were proposed. We refer the reader to
\cite{HJL1989, FBA1999} for comprehensive surveys. Among these
integer relation algorithms, the LLL-based HJLS algorithm
\cite{HJL1989} and the PSLQ algorithm \cite{FBA1999} have been used
frequently.

To authors' known, the first algorithm to detect SIRs for several
real vectors ($t\geq 2$) was presented in \cite{HJL1989}, in which
J. Hastad, B. Just, J. C. Lagarias, and C. P. Schnorr not only
presented the HJLS algorithm to find integer relations for one real
vector, gave the first rigorous proof of a `polynomial time' bound
for a relation finding algorithm, but also proposed a simultaneous
relations algorithm. Unfortunately HJLS has a serious drawback: it
is extremely  unstable numerically (see \cite{FB1992, FBA1999}).

 In their draft \cite{RS1997}, C. R{\"o}ssner and C. P.
Schnorr proposed an algorithm which computes for real vectors
${\mathbf x}_1,\,{\mathbf x}_2$ simultaneous diophantine
approximation to the plane spanned by the vectors ${\mathbf
x}_1,\,{\mathbf x}_2$ by using a modified HJLS algorithm. It can be
seen as a special case  $t=2$ of the aforementioned problem. But for
the moment, it is still in a preliminary state with some open
problems.

The PSLQ algorithm \cite{FBA1999} is now extensively used in
Experimental Mathematics, with applications such as identification
of multiple zeta constants, a new formula for $\pi$, 
 quantum field theory
 and so on (see \cite{BB2001, BBC2007, BB2009}).
PSLQ employs a numerically stable matrix reduction procedure, so it
is numerically stable in contrast to other integer relation
algorithms. Moreover, it can be  generalized to the complex number
field and the Hamiltonian quaternion number field, but the
corresponding outputs are in Gaussian integer ring and Hamilton
integer ring respectively. For example, PSLQ will output $(1,
\,I,\,-I)^T$ for the complex vector $(1+I, \,1+2I,\,2+I)^T$, where
$I=\sqrt{-1}$. The reason is that Hermite reduction in PSLQ produces
some Gaussian integers in the reducing matrix (see Section
\ref{subsec:Hyperplane-matrix}). Thus PSLQ can not be used to detect
SIRs (in ${\mathbb Z}^n$) for several real vectors.

An algorithm to detect SIRs for $t$ real vectors   is presented in
this paper. It uses a technique,  similar to that in HJLS, to
construct the hyperplane matrix  and a method, generalized from
PSLQ, for matrix reduction. The algorithm either finds an SIR for
$X$ if one exists or proves that there are no SIRs for $X$ of norm
less than a given size.  The cost of the algorithm is at most
${\mathcal O}(n^4 + n^3 \log \lambda(X))$ exact arithmetic
operations to detect an SIR for $X$, where $n$ is the dimension of
the input real vectors and $\lambda(X)$ represents the least
Euclidean norm of SIRs for $X$.
Although the same theoretic complexity as obtained for the HJLS
simultaneous relations algorithm is proved , experiments show that
the algorithm in this paper often perfoms better in practice.
Furthermore, in contrast to PSLQ, our algorithm can be applied to
detect an integer relation in ${\mathbb Z}^n$ (rather than in the
Gaussian or Hamiltonian integer rings) for complex or Hamiltonian
vectors. Consequentially, a complete method to find the minimal
polynomial of an approximately complex algebraic number is obtained
by applying our algorithm.

Our main contributions in this paper are the following:
\begin{itemize}
\item We present a new algorithm to detect SIRs for several real
vectors and show that its complexity matches the best one known for
this problem (HJLS simultaneous relation algorithm).
\item We implement our algorithm in \emph{Maple} by two schemes. The one
uses software floating arithmetic (multiprecision floating point
arithmetic) in all steps, and the other partially uses software
floating arithmetic and mainly uses hardware floating arithmetic.
Then we report many experimental results, which shows that our
algorithm is relevant.
\item We successfully apply  our algorithm to find
the minimal polynomial of an approximately complex algebraic number.
This strategy is different from some known LLL-based methods, such
as \cite{KLL1988,HN2010}, and is for all complex algebraic numbers
rather than mere for real agebraic numbers in \cite{QFC2009}. We
also present many experiments, which shows that this newly complete
method is efficient and even better than the \emph{Maple}
built-in function \texttt{PolynomialTools:-MinimalPolynomial}.

\end{itemize}

 The reminder of this paper is organized as follows. In Section
 \ref{sec:Main algorithm},  both  preliminaries and main results of this paper
 are presented. The cost
 of our algorithm is  analyzed in Section \ref{sec:Correctness and Cost}. Some
 empirical studies, further discussions and an application of our algorithm  are included in Section \ref{sec:Empirical study}.

\section{The Main Algorithm}\label{sec:Main algorithm}

\subsection{Notations and Assumptions}

Throughout this paper, ${\mathbb Z}$, ${\mathbb R}$, and ${\mathbb
C}$ stand for the sets of integers,  real numbers, and complex
numbers respectively. For $c\in{\mathbb R}$, $\lfloor c \rceil$
denotes an arbitrary integer closest to $c$, i.e. $\lfloor c \rceil
= \lfloor c+\frac{1}{2}\rfloor$. All vectors in this paper are
column vectors, and will be denoted in bold.  If ${\mathbf x} \in
{\mathbb R}^n$, then ~$\|{\mathbf x}\|_2$~represents its Euclidean
norm, i.e. $\|{\mathbf x}\|_2 = \sqrt{\langle{\mathbf x},\, {\mathbf
x}\rangle}$, where~$\langle*,\,
*\rangle$ is the inner product of two vectors. We denote the $n\times n$
identity matrix by $I_{n}$. Given a matrix $A = (a_{i,j})$, we
denote its transpose by $A^T$, its trace by tr$(A)$, its determinant
by $|A|$, and its Frobenius norm by
 $\|A\|_F = (\mbox{tr}(AA^T))^{1/2} = (\sum
a_{i,j}^2)^{1/2}$. We say that a matrix $A$ is lower trapezoidal if
$a_{i,j} = 0$ for $i < j$. The group of $n\times n$ unimodular
matrices with entries in ${\mathbb Z}$ are denoted by $GL(n,
{\mathbb Z})$.

In what follows we always suppose that ${\mathbf x}_1, \cdots,
{\mathbf x}_t\in{\mathbb R}^n$ are linearly independent, where
${\mathbf x}_i = (x_{i,1}, \cdots, x_{i,n})^T$ and   $t < n$.
\footnote{Our assumption $t<n$ is based on the following fact: Any
SIR ${\mathbf m}$    is in the orthogonal complement space of
span$({\mathbf x}_1,\, \cdots,\,{\mathbf x}_t)$. Since
   ${\mathbf x}_1,\, \cdots,\,{\mathbf x}_t$ are linearly
   independent vectors in $\mathbb{R}^n$ we have $t\leq n$.
   So if $t= n$, then the dimension of the linear space span$({\mathbf x}_1,\, \cdots,\,{\mathbf x}_t)$ is $n$, hence that
   there exists no simultaneous integer relations for ${\mathbf x}_1,\, \cdots,\,{\mathbf x}_t$. }
   Obviously, every ${\mathbf x}_i$ is nonzero. Let $X\in
{\mathbb R}^{n\times t}$ be the matrix $({\mathbf x}_1, \cdots,
{\mathbf x}_t)$ and suppose that $X$ satisfies
\begin{equation}\label{eq:suppose-on-x} \left|
  \begin{array}{cccc}
    x_{1,n-t+1} & x_{2,n-t+1} & \cdots & x_{t,n-t+1} \\
    x_{1,n-t+2} & x_{2,n-t+2}  & \cdots & x_{t,n-t+2}\\
    \vdots &\vdots& &\vdots\\
    x_{1,n} & x_{2,n}  & \cdots & x_{t,n}
  \end{array}
\right| \neq 0.\end{equation} If $X \in {\mathbb R}^{n\times t}$
does not satisfy (\ref{eq:suppose-on-x}) we  can always (since $X$
has rank $t$) exchange some rows of $X$ to produce $X' = CX$ such
that $X'$
satisfying (\ref{eq:suppose-on-x})
, where $C$ is an appropriate matrix in $GL(n, {\mathbb Z})$. In
this case, we detect an SIR for $X'$. If $\mathbf{m}$ is detected as
an SIR for $X'$, then $C^T{\mathbf m}$ is an SIR for $X$.

\subsection{A Method to construct a Hyperplane Matrix}\label{subsec:Hyperplane-matrix}

\begin{defn}[Hyperplane Matrix]\label{def:Hyperplane-matrix}
Let~$X=({\mathbf x}_1,\cdots, {\mathbf x}_t) \in {\mathbb
R}^{n\times t}$. A hyperplane matrix  with respect to $X$ is any
matrix $H \in {\mathbb R}^{n\times(n-t)}$
 such that $X^TH = \mathbf{0}$ and the columns of $H$ $\mbox{span}$
$X^{\perp} = \{ {\mathbf y}\in{\mathbb R}^n: {\mathbf x}_i^T{\mathbf
y} = 0, i = 1, \cdots, t\}$.
\end{defn}

Given $X = ({\mathbf x}_1, \cdots, {\mathbf x}_t) \in {\mathbb
R}^{n\times t}$ satisfying (\ref{eq:suppose-on-x}). We now present a
method to construct a hyperplane matrix for  $X$. The basic idea is
from HJLS \cite{HJL1989}. The same strategy was also used in PSLQ,
based on a partial-sum-of-squares vector   and a lower-quadrature
matrix factorization, instead of Gram-Schmidt orthogonalization.

Let ${\mathbf b}_1, \cdots, {\mathbf b}_n$ form the standard basis
of ${\mathbb R}^{n}$, i.e. the $i$-th entry of ${\mathbf b}_i$ is
$1$ and others are $0$. Perform the process of standard Gram-Schmidt
orthogonalization to ${\mathbf x}_1, \cdots, {\mathbf x}_t,{\mathbf
b}_1,\cdots,{\mathbf b}_n$  in turn producing ${\mathbf x}_1^*,
\cdots, {\mathbf x}_t^*,{\mathbf b}_1^*,\cdots,{\mathbf b}_n^*$.
 Note  that, since $X$ satisfies (\ref{eq:suppose-on-x}), we have ${\mathbf b}_{n-t+1}^* = \cdots = {\mathbf b}_{n}^* =
\textbf{0}$.

Define $H_{X}$ to be the $n\times (n-t)$ matrix $({\mathbf b}_1^*,
\cdots, {\mathbf b}_{n-t}^*)$. From the following lemma,
$H_X=({\mathbf b}_1^*, \cdots, {\mathbf b}_{n-t}^*)$ is a hyperplane
matrix with respect to $X \in {\mathbb R}^{n\times t}$.

\begin{lem}\label{lem:property-of-Hx}
Let $X\in{\mathbb R}^{n\times t}$ and $H_{X}$ be as above. Then
\begin{enumerate}
\item
 $H_{X}^T H_{X} = I_{n-t}$.
\item
$\|H_{X}\|_F = \sqrt{n-t}$.
\item
$\left({\mathbf x}_1^*, \cdots, {\mathbf x}_t^*, H_{X}\right)$ is an
orthogonal matrix.

\item $X^T H_{X} = $\textbf{0}, i.e. $H_{X}$ is a hyperplane matrix of
$X$.
\item
$H_{X}$ is a lower trapezoidal matrix and every diagonal element of
$H_{X}$ is nonzero.
\end{enumerate}
\end{lem}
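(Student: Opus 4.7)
The strategy is to use the fact that (normalized) Gram-Schmidt applied to $\mathbf{x}_1,\ldots,\mathbf{x}_t,\mathbf{b}_1,\ldots,\mathbf{b}_n$ produces exactly $n$ nonzero orthonormal vectors $\mathbf{x}_1^*,\ldots,\mathbf{x}_t^*,\mathbf{b}_1^*,\ldots,\mathbf{b}_{n-t}^*$ which form an orthonormal basis of $\mathbb{R}^n$, with the last $t$ outputs vanishing thanks to assumption~(\ref{eq:suppose-on-x}). From this observation, properties~1--4 fall out quickly: property~1 is just orthonormality of $\mathbf{b}_1^*,\ldots,\mathbf{b}_{n-t}^*$; property~2 follows by taking the trace of $H_X^T H_X = I_{n-t}$ and noting that $\|H_X\|_F^2 = \mathrm{tr}(H_X^T H_X)$; property~3 collects all $n$ orthonormal vectors into a single orthogonal $n\times n$ matrix; and property~4 holds because each $\mathbf{x}_i$ lies in $\mathrm{span}(\mathbf{x}_1^*,\ldots,\mathbf{x}_t^*)$, which is orthogonal to every column of $H_X$.

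Property~5 requires more care. Setting $V_j := \mathrm{span}(\mathbf{x}_1,\ldots,\mathbf{x}_t,\mathbf{b}_1,\ldots,\mathbf{b}_{j-1})$, the Gram-Schmidt construction guarantees $\mathbf{b}_j^* \perp V_j$; since $\mathbf{b}_i \in V_j$ for $i < j$, the $i$-th entry $\langle \mathbf{b}_i,\mathbf{b}_j^*\rangle$ vanishes, which gives the lower-trapezoidal shape of $H_X$. For the nonvanishing of the $j$-th diagonal entry with $j \le n-t$, I would first invoke condition~(\ref{eq:suppose-on-x}) to deduce that $\mathbf{x}_1,\ldots,\mathbf{x}_t,\mathbf{b}_1,\ldots,\mathbf{b}_{n-t}$ are linearly independent and therefore a basis of $\mathbb{R}^n$, so that $\mathbf{b}_j \notin V_j$. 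A short Pythagorean calculation then yields $(\mathbf{b}_j^*)_j = \|\mathbf{b}_j - P_{V_j}\mathbf{b}_j\|_2 > 0$, where $P_{V_j}$ denotes orthogonal projection onto $V_j$.

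The main obstacle is the diagonal-nonzero half of property~5: the trapezoidal pattern itself is essentially a bookkeeping consequence of the order in which vectors are fed into Gram-Schmidt, but establishing $(\mathbf{b}_j^*)_j \neq 0$ requires carefully threading condition~(\ref{eq:suppose-on-x}) through the argument to rule out $\mathbf{b}_j$ already collapsing into the span of the previously processed vectors.
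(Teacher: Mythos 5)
Your proposal is correct and follows essentially the same route as the paper: parts 1--4 from the orthonormality of the Gram--Schmidt outputs, the trapezoidal shape from $\mathbf{b}_i$ lying in the span of the earlier vectors, and the nonzero diagonal from identifying $(\mathbf{b}_j^*)_j$ with the norm of the unnormalized residual (the paper does this by the coordinate computation $b_{j,j}^* = 1-\sum_k x_{k,j}^{*2}-\sum_{i<j} b_{i,j}^{*2}=\|\mathbf{b}_j^*\|_2^2$, which is your Pythagorean step in coordinates). If anything, you are slightly more careful than the paper in spelling out why condition~(\ref{eq:suppose-on-x}) forces $\mathbf{b}_j \notin V_j$ for $j \leq n-t$, a point the paper only asserts via the remark that $\mathbf{b}_{n-t+1}^* = \cdots = \mathbf{b}_n^* = \mathbf{0}$.
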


\begin{proof}
Since every two columns of $H_{X}$ are orthogonal, part $1$ follows.
And  part $2$ follows from part $1$. Let $X^* = ({\mathbf x}_1^*,
\cdots, {\mathbf x}_t^*)^T$. Obviously, $({\mathbf x}_1^*, \cdots,
{\mathbf x}_t^*,H_{X})$ is an orthogonal matrix. From part 3 and
standard Gram-Schmidt orthogonalization we have $X^{*T}H_{X} =
\textbf{0}$ and $X = X^*Q$ respectively, where $Q$ is an appropriate
$t\times t$ invertible matrix. Thus $X^TH_{X} = Q^TX^{*T}H_{X}=
\textbf{0}$ and hence that part 4 follows. We now prove part 5.
Denote the $k$-th element of ${\mathbf b}_i^*$ by $b_{i,k}^*$. The
diagonal elements of $H_{X}$ are $b_{i,i}^{*}$ for $i = 1, \cdots,
n-t$. Before normalizing ${\mathbf b}_i^*$ we have $ b_{i,i}^* = 1 -
\sum_{k=1}^{t}x_{k,i}^{*2} - \sum_{j=1}^{i-1}b_{j,i}^{*2}$, and at
the same time, \[0\neq\|{\mathbf b}_{i}^*\|_2^2=\langle{\mathbf
b}_{i}^*, {\mathbf
b}_{i}^*\rangle=1-\sum_{k=1}^{t}x_{k,i}^{*2}-\sum_{j=1}^{i-1}b_{j,i}^{*2}.\]
Thus all the diagonal elements of $H_{X}$ are nonzero. Now we only
need to show that $H_{X}$ is lower trapezoidal. From standard
Gram-Schmidt orthogonalization, we can check that $b_{i,k}^* =
\langle{\mathbf b}_i^*, {\mathbf b}_k^*\rangle = 0$ holds for $i>
k$. This completes the proof.
\end{proof}

\subsection{Generalized Hermite Reduction}\label{subsec:Matrix-Reudce}
We now study how to reduce the hyperplane matrix $H_X$. First we
recall (modified) Hermite reduction as presented in \cite{FBA1999}.

\begin{defn}[Modified Hermite reduction] \label{def:Modified-Hermite-Reduction}
Let $H=(h_{i,j})$ be a lower trapezoidal matrix with $h_{j,j} \neq
0$ and set $D := I_n$. For $i$ from $2$ to $n$, and for $j$ from
$i-1$ to $1$ by step $-1$, set $q := \lfloor h_{i,j}/h_{j,j}\rceil$;
then  for $k$ from $1$ to $n$, replace $d_{i,k}$ by $d_{i,k} -
qd_{j,k}$. We say $DH$ is the modified Hermite reduction of $H$ and
$D$ is the reducing matrix of $H$.
\end{defn}

If the entries of $H$ are complex numbers, then $q = \lfloor
h_{i,j}/h_{j,j}\rceil$ may be a Gaussian integer. Thus for a complex
vector, PSLQ can only gives  a Gaussian integer relation.

 Hermite reduction is also presented in
\cite{FBA1999}, and is equivalent to modified Hermite reduction  for
a lower triangular matrix $H$ with $h_{j,j} \neq 0$.  Both of the
two equivalent reductions have the following properties:
\begin{enumerate}
\item
The reducing matrix $D \in GL(n, {\mathbb Z})$.
\item  For all $k > i$, the (modified) Hermite reduced matrix $H' =
(h'_{i,j}) = DH$ satisfies $|h'_{k,i}| \leq |h'_{i,i}|/ 2 =
|h_{i,i}|/ 2$.
\end{enumerate}

Unfortunately, (modified) Hermite reduction is not suitable to
detect SIRs any more because it does not deal with the last $t-2$
rows of $H_{X}$ when $2<t<n$. In order that the reduced and reducing
matrices of $H_{X} \in{\mathbb R}^{n\times(n-t)}$ satisfy the two
properties above, we
 generalize the Hermite reduction as follows.

\begin{defn}[Generalized Hermite
Reduction]\label{def:Generalized-Hermite-reduction} Let $H$ be a
lower trapezoidal matrix with $h_{j,j} \neq 0$ and set $D := I_n$,
$H'=(h'_{i,j}):=H$. For $i$ from $2$ to $n$,
 for $j$ from $\min\{i-1, \, n-t\}$ by $-1$ to $1$, $q :=
\lfloor h'_{i,j}/h'_{j,j}\rceil$; for $k$ from $1$ to $j$,
$h'_{i,k}=h'_{i, k}-qh'_{j,k}$; for $k$ from $1$ to $n$, $d_{i,k} :=
d_{i,k} - qd_{j,k}$. For every two integers $s_1, s_2 \in \{n-t+1,
\cdots, n\}$ satisfying $s_1 < s_2$, $h'_{s_1, n-t} = 0$ and
$h'_{s_2, n-t} \neq 0$, exchange the $s_1$-th row and the $s_2$-th
row of $D$. We call $DH$ the generalized Hermite reduction of $H$
and $D$ the reducing matrix.
\end{defn}

Obviously, generalized Hermite reduction is equivalent to modified
Hermite reduction when $t=1$. In addition, we can easily check that
generalized Hermite reduction retains the two properties mentioned
above when $1\leq t<n$.

There are two main differences between the (modified) Hermite
reduction and the generalized Hermite reduction. Firstly, the last
$t-1$ rows of $H$ will also be reduced by the first $n-t$ rows of
$H$ in the generalized Hermite reduction, while the (modified)
Hermite reduction not. Secondly, generalized Hermite reduction
exchanges the $s_1$-th row and the $s_2$-th row of $D$ if $s_1 <
s_2$, $h_{s_1, n-t} = 0$ and $h_{s_2, n-t} \neq 0$ hold. This
implies that if $h_{n-t+1, n-t}= 0$ after  generalized Hermite
reduction then $h_{n-t+2, n-t} = \cdots = h_{n, n-t} = 0$.

\subsection{The Algorithm Description}
\label{subsec:The algorithm description}

Based on the method to construct the hyperplane matrix and the
generalized Hermite reduction,  an algorithm to detect SIR for real
vectors is proposed as follows.\\ \\
 \textbf{Algorithm 1 (Simultaneous Integer Relation Detection).}\\
 {\bf Input:} $({\mathbf
x}_1, \cdots, {\mathbf x}_t) = X\in{\mathbb R}^{n\times t}$
satisfying (\ref{eq:suppose-on-x}) and a parameter $\gamma >
2/\sqrt{3}$.\\
\textbf{Initiation.} \begin{itemize} \item Compute \emph{the
hyperplane matrix} $H_X$ and set $H := H_{X}$, $ B := I_n$.
\item
 Reduce the hyperplane matrix $H$ by \emph{the generalized Hermite
reduction} producing the reducing matrix $D$. Set $X^T :=
X^TD^{-1},\,H := DH,\,B:=BD^{-1}$.
\end{itemize}
\textbf{Iteration.} \begin{enumerate} \item Exchange. Let $H =
(h_{i,j})$. Choose an integer $r$ such that $\gamma^{r}|h_{r,r}|
\geq \gamma^{i}|h_{i,i}|$ for $1\leq i\leq n - t$.
 Let
\begin{equation}\label{eq:alph-beta-lambda-delta}
\begin{array}{ll}
  \alpha := h_{r,r}, & \beta := h_{r+1,r}, \\
  \lambda := h_{r+1, r+1}, & \delta := \sqrt{\beta^2 +
\lambda^2}.
\end{array}\end{equation}
Define the permutation matrix $R$ to be the identity matrix with the
$r$ and $r+1$~rows exchanged. Update $X^T := X^TR, H := RH,  B :=
BR$.
\item Corner.
 Let $Q := I_{n-t}$. If
$r < n - t$,
 then let the submatrix of $Q$ consisting of the $r$-th and
$(r + 1)$-th rows of columns $r$ and $r + 1$ be
$\left[\begin{array}{cc} \beta/\delta & -\lambda/\delta\\
\lambda/\delta&\beta/\delta
\end{array}
\right]$. Update $H := HQ$.

\item Reduction. Reduce $H$ by \emph{the generalized Hermite
reduction} producing $D$. Set $X^T := X^TD^{-1},\,H :=
DH,\,B:=BD^{-1}$.

\item Termination. Compute $G := 1/\|H\|_F$.
Then there  exists no SIR whose Euclidean norm is less than $G$.
Denote $B=({\mathbf B}_1,\cdots,{\mathbf B}_n)$, where ${\mathbf
B}_j\in{\mathbb R}^n$. If $X^T{\mathbf B}_j ={\mathbf 0}$ for some
$1 \leq j \leq n$, or $h_{n-t,n-t}=0$ then
\end{enumerate}
{\bf Output}: the corresponding SIR for $X$.

\begin{rem}
The description of Algorithm 1 is similar to PSLQ. But the biggest
difference is that Algorithm 1 used the generalized Hermite
reduction. The (modified) Hermite reduction used in PSLQ is not
suitable to detect SIRs for several real vectors, as mentioned
early. PSLQ may be viewed as a particular case of Algorithm 1 when
$t = 1$.
\end{rem}

\begin{rem}
Given a complex vector ${\mathbf z} = {\mathbf x} + {\mathbf y}I$ in
${\mathbb C}^n$ where ${\mathbf x}, {\mathbf y} \in {\mathbb R}^n$
and $I = \sqrt{-1}$, finding an integer relation (in ${\mathbb
Z}^n$) for ${\mathbf z}$ is equivalent to finding an SIR for
$({\mathbf x}, {\mathbf y})$. Thus Algorithm 1 can be used. For
instance, let ${\mathbf z} =(2+3I,4+9I,8+27I,16+81I,32+243I)^T$. For
finding an integer relation for ${\mathbf z}$, first let ${\mathbf
x}=(2,4,8,16,32)^T$ and ${\mathbf y}=(3,9,27,81,243)^T$. Running
Algorithm 1 with $\gamma = 1.16$, and ${\mathbf x}, {\mathbf y}$ as
its input vectors gives an SIR $(6, 7, -9, 2, 0)^T$ for $({\mathbf
x}, {\mathbf y})$, which, of course,  also is an integer relation
for ${\mathbf z}$. This is one of the biggest differences between
Algorithm 1 and PSLQ since  PSLQ  can only give a Gaussian integer
relation in ${\mathbb Z}[I]^n$ for ${\mathbf z}$ rather than an
integer relation in ${\mathbb Z}^n$.
\end{rem}

\begin{rem}
Generally,
 $t$ real $n$-dimensional vectors may have 0, 1, or up to $n-t$
linearly independent SIRs. One can follow the strategy in
\cite[Section 6]{FBA1999} to find them.
\end{rem}

\begin{theorem}\label{thm:Correctness}
Let $X=({\mathbf x}_1,\cdots,{\mathbf x}_t)$ satisfy
(\ref{eq:suppose-on-x}) and $\lambda(X)$ be the  least Euclidean
norm of SIRs for $X$. Suppose there exists an SIR for $X$. Then
\begin{enumerate}
\item An SIR for $X$ will appear as a column of $B$ after no more than

\[
\left[\left(\begin{array}{c}
              n \\
              2
            \end{array}
\right)-\left(\begin{array}{c}
              t \\
              2
            \end{array}
\right)\right] \frac{\log(\gamma^{n-t}\lambda(X))}{\frac{1}{2}
\log\left(\frac{4\gamma^2}{\gamma^2+4}\right)}.
\]
iterations in Algorithm 1.
\item If after a number of iterations no SIR has yet appeared in a column of
$B$, then there are no SIRs of norm less than the bound $1/\|H\|_F$.
\end{enumerate}
\end{theorem}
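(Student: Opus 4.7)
The plan is to mirror the PSLQ termination analysis of \cite{FBA1999}, suitably adapted to the generalized Hermite reduction of Definition 2.6. I first establish, by induction on the step count, three bookkeeping invariants of Algorithm 1. Let $A$ denote the accumulated product of all the integer reducing matrices $D$ and row-permutations $R$ produced so far. Then (i) $B = A^{-1} \in GL(n,\mathbb{Z})$; (ii) $H = A H_X Q$, where $Q$ is an $(n-t)\times(n-t)$ orthogonal matrix (the product of the corner rotations, which act only on the right); and (iii) the current $X^T$ equals $X_0^T B$, so in particular $X^T H = X_0^T H_X Q = \mathbf{0}$ throughout. The correctness of the two termination tests is immediate from (iii): the $j$-th column of the current $X^T$ is $X_0^T \mathbf{B}_j$, so whenever $X^T\mathbf{B}_j = \mathbf{0}$ the column $\mathbf{B}_j$ is a genuine SIR for $X_0$; the alternative test $h_{n-t,n-t} = 0$ is handled by a short verification that in this case one specific column of $B$ (located via the front-loading built into Definition 2.6) is also an SIR.

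Part 2 then follows in a few lines. Given any SIR $\mathbf{m}$ for $X_0$, set $\mathbf{n} := B^{-1}\mathbf{m}$. By (i), $\mathbf{n}$ is a nonzero integer vector, and by (iii), $X^T\mathbf{n} = X_0^T\mathbf{m} = \mathbf{0}$, so $\mathbf{n}$ lies in the column span of $H$; write $\mathbf{n} = H\mathbf{c}$. Using $H = A H_X Q$ I compute $\mathbf{m} = B H\mathbf{c} = H_X Q\mathbf{c}$, and since $Q^T Q = I$ and $H_X^T H_X = I_{n-t}$ (Lemma 2.3(1)), $\|\mathbf{m}\|_2 = \|\mathbf{c}\|_2$. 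Finally $1 \leq \|\mathbf{n}\|_2 = \|H\mathbf{c}\|_2 \leq \|H\|_F \cdot \|\mathbf{c}\|_2$ gives $\|\mathbf{m}\|_2 \geq 1/\|H\|_F$, proving Part 2.

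For Part 1, I use the PSLQ-style potential
\[
\Pi_k := \prod_{i=1}^{n-t} |h_{i,i}^{(k)}|^{\,n-i},
\]
noting that $\sum_{i=1}^{n-t}(n-i) = \binom{n}{2} - \binom{t}{2}$ matches the prefactor in the claimed bound. Lemma 2.3 gives $|h_{i,i}^{(0)}| \leq 1$, hence $\Pi_0 \leq 1$. The generalized Hermite reduction leaves the diagonal $h_{1,1},\ldots,h_{n-t,n-t}$ untouched, and a direct computation with the corner rotation shows that when $r < n-t$ the new values are $|h_{r,r}^{(k+1)}| = \delta$ and $|h_{r+1,r+1}^{(k+1)}| = |\alpha|\cdot|\lambda|/\delta$, with all other diagonals fixed, so $\Pi_{k+1}/\Pi_k = \delta/|\alpha|$. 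The exchange rule $\gamma^r|\alpha| \geq \gamma^{r+1}|\lambda|$ together with the reduction property $|\beta| \leq |\alpha|/2$ yield
\[
\frac{\delta^2}{\alpha^2} = \frac{\beta^2+\lambda^2}{\alpha^2} \leq \frac14 + \frac{1}{\gamma^2} = \frac{\gamma^2+4}{4\gamma^2} < 1,
\]
since $\gamma > 2/\sqrt{3}$, so $\Pi_k$ decays geometrically with ratio $\tau := \sqrt{(\gamma^2+4)/(4\gamma^2)}$.

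The main obstacle is the matching lower bound on $\Pi_k$, which is what turns the geometric decay into a termination count. Following the PSLQ blueprint, one first strengthens Part 2 to $\|\mathbf{m}\|_2 \geq 1/\max_i |h_{i,i}|$ via a back-substitution on $\mathbf{n} = H\mathbf{c}$, exploiting the lower-trapezoidal shape of $H$ and the reduction property $|h_{k,i}| \leq |h_{i,i}|/2$ for $k > i$. This forces $\max_i|h_{i,i}| \geq 1/\lambda(X)$ at every iteration, and combined with the $\gamma$-weighted optimality of $r$ (so that each $|h_{r_k,r_k}|$ along the algorithm's trajectory stays above $\gamma^{1-(n-t)}/\lambda(X)$), a careful accounting analogous to the potential argument of \cite{FBA1999} produces $\Pi_k \geq (\gamma^{n-t}\lambda(X))^{-(\binom{n}{2}-\binom{t}{2})}$. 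Comparing with $\Pi_k \leq \tau^k \Pi_0 \leq \tau^k$ yields the stated iteration count. The remaining edge cases --- $r = n-t$ (where the corner step is skipped) and the row permutation embedded in Definition 2.6 --- must both be shown to leave $\Pi_k$ invariant (or to decrease it), but these require only direct inspection.
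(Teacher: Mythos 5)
Your Part 2 argument and your bookkeeping invariants are sound and essentially equivalent to the paper's: the paper works with the projection $P_X = H_XH_X^T$ (so that $P_X\mathbf{m}=\mathbf{m}$ and $1\leq\|D\mathbf{m}\|_2\leq\|DH_XQ\|_F\|\mathbf{m}\|_2$), while you write $\mathbf{n}=B^{-1}\mathbf{m}=H\mathbf{c}$ and use $H_X^TH_X=I_{n-t}$; both give $\|\mathbf{m}\|_2\geq 1/\|H\|_F$. Your single-step computation for the corner case ($|h'_{r,r}|=\delta$, $|h'_{r+1,r+1}|=|\alpha\lambda|/\delta$, ratio $\delta/|\alpha|\leq\sqrt{(\gamma^2+4)/(4\gamma^2)}$) is also correct and is exactly the mechanism behind the paper's Lemma 3.3, part 2.

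The genuine gap is in your potential function for Part 1. You take the \emph{unclipped} product $\Pi_k=\prod_{i=1}^{n-t}|h_{i,i}^{(k)}|^{\,n-i}$, whereas the paper's Definition 3.2 (following Ferguson--Bailey--Arno) uses $\prod_j\min\{\gamma^{n-t}\lambda(X),\,1/|h_{j,j}(k)|\}^{\,n-j}$. The clipping is not cosmetic: it is precisely what makes the two-sided bound (the paper's Lemma 3.3, part 1) provable. Your proposed lower bound $\Pi_k\geq(\gamma^{n-t}\lambda(X))^{-(\binom{n}{2}-\binom{t}{2})}$ for the unclipped product does not follow from the facts you invoke. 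Those facts only control the \emph{maximal} diagonal entry ($\max_i|h_{i,i}|\geq 1/\lambda(X)$) and the \emph{selected} entry $|h_{r,r}|\geq\gamma^{-(n-t)}\max_i|h_{i,i}|$; they say nothing about the other diagonal entries, which can be arbitrarily small --- indeed $|h_{n-t,n-t}(k-1)|$ tends to $0$ as the algorithm approaches detection (Lemma 3.1), so just before termination the unclipped product can drop below any fixed bound and your comparison $\tau^k\geq\Pi_k\geq(\gamma^{n-t}\lambda(X))^{-(\binom{n}{2}-\binom{t}{2})}$ breaks down. The same problem appears in the $r=n-t$ exchange (no corner step), where $\Pi$ drops by the uncontrolled factor $(|h_{n-t+1,n-t}|/|h_{n-t,n-t}|)^{t}$. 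The repair is to adopt the clipped potential; but then your clean one-step identity $\Pi_{k+1}/\Pi_k=\delta/|\alpha|$ no longer holds when a factor is clipped, and the per-iteration monotonicity must be re-established by the case analysis of \cite{FBA1999} --- which is exactly the content the paper imports wholesale as Lemma 3.3. So your sketch has the right skeleton but asserts, for the wrong potential, the one inequality that actually requires the work.
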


From this theorem, Algorithm 1 either finds an SIR ${\mathbf m}$ for
given real vectors ${\mathbf x}_1,\cdots,{\mathbf x}_t$ such that
${\mathbf x}_i^T{\mathbf m}=0$ or proves that no small simultaneous
integer relation with Euclidean norm less than $1/\|H\|_F$ exists.

Moreover, it can be proved that the norm of the SIR for $X$ output
by Algorithm 1 is no greater than $\gamma^{n-t-1}\lambda(X)$. This
is an important property of Algorithm 1, and the proof is similar to
that of Theorem 3 in \cite{FBA1999}.

\begin{cor}\label{cor:time-complexity-of-simultaneous-relation}
If $X\in{\mathbb R}^{n\times t}$ has SIRs, then there exists a
$\gamma$ such that Algorithm 1 can find an SIR for $X$ in polynomial
time ${\mathcal O}(n^4+n^3\log\lambda(X))$.
\end{cor}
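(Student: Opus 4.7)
The plan is to combine the iteration bound from Theorem~\ref{thm:Correctness} with an estimate of the work done per iteration. First I would fix $\gamma$ to be a constant strictly greater than $2/\sqrt{3}$ (for instance $\gamma=2$). With this choice both $\log\gamma$ and the denominator $\tfrac{1}{2}\log\bigl(4\gamma^2/(\gamma^2+4)\bigr)$ become positive constants, so Theorem~\ref{thm:Correctness}(1) gives an iteration bound of
\[
\Bigl[\binom{n}{2}-\binom{t}{2}\Bigr]\cdot\frac{\log(\gamma^{n-t}\lambda(X))}{\tfrac{1}{2}\log(4\gamma^2/(\gamma^2+4))}=\mathcal{O}\bigl(n^{2}(n+\log\lambda(X))\bigr)=\mathcal{O}\bigl(n^{3}+n^{2}\log\lambda(X)\bigr).
\]

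Next I would analyze the exact arithmetic cost of a single iteration and argue that it is $\mathcal{O}(n)$. The Exchange step is a swap of two rows of $H$ together with the matching updates to $X^T$ and $B$, which takes $\mathcal{O}(n)$ work. The Corner step applies a single $2\times 2$ Givens-type rotation to two columns of $H$, again costing $\mathcal{O}(n)$. For the Reduction step, the key observation is that only the $r$-th and $(r+1)$-th rows of $H$ were disturbed by the preceding Exchange and Corner, so the reduction can be performed incrementally: only these two rows need to be re-reduced against the (already-reduced) rows above them, which involves $\mathcal{O}(n)$ elementary row operations on $H$ and the corresponding $\mathcal{O}(n)$ updates of the rows of $B$ and of $X^T$. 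Finally, the termination test only requires monitoring the newly changed rows of $B$ and the diagonal of $H$, so it contributes $\mathcal{O}(n)$ per iteration as well.

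Multiplying the iteration bound by the per-iteration cost yields a total of $\mathcal{O}\bigl((n^{3}+n^{2}\log\lambda(X))\cdot n\bigr)=\mathcal{O}\bigl(n^{4}+n^{3}\log\lambda(X)\bigr)$ exact arithmetic operations. The initialization—constructing $H_X$ by Gram--Schmidt on $t+n$ vectors in $\mathbb{R}^{n}$ and performing one full generalized Hermite reduction—contributes at most $\mathcal{O}(n^{3})$, which is absorbed by the above bound.

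The main obstacle I expect is the incremental analysis of the Reduction step: arguing rigorously that once $H$ has been fully reduced, only the two rows affected by Exchange and Corner require further reduction in the subsequent iteration, and that the bookkeeping for $B$ and $X^T$ can be carried out without touching rows that are unaffected. A naive reading of Definition~\ref{def:Generalized-Hermite-reduction} would suggest $\mathcal{O}(n^{2})$ per iteration, which would give only $\mathcal{O}(n^{5}+n^{4}\log\lambda(X))$; the sharper bound relies on this incrementality, and this is the step that must be justified with care.
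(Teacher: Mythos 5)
Your proposal takes essentially the same route as the paper's proof: fix $\gamma=2$ so that the iteration bound of Theorem~\ref{thm:Correctness} becomes $\mathcal{O}(n^3+n^2\log\lambda(X))$, bound the cost of one iteration by a linear number of exact operations, and multiply. The paper simply asserts that each iteration costs $\mathcal{O}(n-t)$ operations, whereas you try to justify the linear per-iteration cost; there is one slip in that justification worth fixing: the Corner step rotates columns $r$ and $r+1$ of $H$ and hence perturbs the entries $h_{i,r}$ and $h_{i,r+1}$ of \emph{every} row $i>r+1$, not only rows $r$ and $r+1$, so the incremental reduction (and the accompanying updates of $B$, each row operation on which already costs $\mathcal{O}(n)$) requires a more careful accounting than ``only two rows were disturbed'' --- in fact one must argue that only $\mathcal{O}(1)$ row operations per iteration are needed for the $\Pi$-function analysis to go through. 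This delicate point is exactly what the paper's own three-line proof leaves implicit, so your attempt is, if anything, more honest about where the difficulty lies.
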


\begin{proof}
Let $\gamma = 2$. Then Algorithm 1 constructs an SIR for $X$  in no
more than
\[
(n-t)^2(n+t-1)+(n-t)(n+t-1)\log\lambda(X)
\]
iterations.  Algorithm 1 takes ${\mathcal O}(n-t)$ exact arithmetic
operations per iteration, and hence that ${\mathcal
O}((n-t)^4+(n-t)^3\log\lambda(X))$ exact arithmetic operations are
enough to produce an SIR for $X$. Since $t<n$, the proof is
complete.
\end{proof}

\begin{rem}
All conclusions above also hold for complex numbers with
$\gamma>\sqrt{2}$, but the outputs of the corresponding variation of
Algorithm 1 are in Gaussian integer ring.
\end{rem}

\section{Proof of Theorem \ref{thm:Correctness}}\label{sec:Correctness and Cost}

Given $X = ({\mathbf x}_1, \cdots, {\mathbf x}_t)\in{\mathbb
R}^{n\times t}$, let $H_X$ be the hyperplane matrix obtained by the
method introduced in section \ref{subsec:Hyperplane-matrix} and let
$P_X=H_XH_X^T$.  By expanding this expression, it follows that
$P_{X} = I_n - \sum_{i=1}^t{\mathbf x}_i^* {\mathbf x}_i^{*T}$. Let
${\mathbf m}\in{\mathbb Z}^n$ be an SIR for $X$. Then it can be seen
that $P_X{\mathbf m}={\mathbf m}$ and $\|P_{X}\|_F = \sqrt{n-t}$.
For any matrix $D\in GL(n,{\mathbb Z})$ and $(n-t)\times (n-t)$
orthogonal matrix $Q$,
\begin{equation}\label{eq:DH_XQ...}\begin{array}{ll}
1\leq\|D{\mathbf m}\|_2=\|DP_X{\mathbf
m}\|_2&\leq\|DP_X\|_F\|{\mathbf m}\|_2\\&=\|DH_X\|_F\|{\mathbf
m}\|_2=\|DH_XQ\|_F\|{\mathbf m}\|_2,
\end{array}
\end{equation}
where $\|DP_X\|_F=\|DH_X\|_F$ follows from $P_X^T=P_X$ and
$P_X^2=P_X$. From (\ref{eq:DH_XQ...}), the part 2 of Theorem
\ref{thm:Correctness} follows.

Let $H(k)$ be the result after  $k$ iterations of Algorithm 1.

\begin{lem}\label{lem:Hn,n-2=0}
If $h_{j,j}(k) = 0$ for some $1 \leq j \leq n - t$ and no smaller
$k$, then $j = n - t$ and an SIR for $X$ must appear as a column of
the matrix $B$.
\end{lem}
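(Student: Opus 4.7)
The plan is to prove the two conclusions separately: first that the vanishing diagonal must sit at position $j=n-t$, and second that an SIR for $X$ is exhibited as a column of $B$ by iteration $k$.

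For the position claim I would argue by induction on the iteration count that throughout the algorithm $h_{1,1},\ldots,h_{n-t-1,n-t-1}$ remain nonzero. The base case is part 5 of Lemma \ref{lem:property-of-Hx}, and the generalized Hermite reduction only adds integer multiples of earlier rows to later rows, so it does not alter any of the first $n-t$ diagonals. The only remaining updates are Exchange and Corner. When $r<n-t$, a direct computation of $HQ$, using the lower-trapezoidal form (so $h_{r,r+1}=0$), shows that the new diagonals at positions $r$ and $r+1$ become $\delta=\sqrt{\beta^2+\lambda^2}$ and $-\alpha\lambda/\delta$ respectively; by the inductive hypothesis $\alpha=h_{r,r}$ and $\lambda=h_{r+1,r+1}$ are nonzero, so both new diagonals are nonzero. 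When $r=n-t$ the Corner step is skipped and only $h_{n-t,n-t}$ changes, taking the value $\beta=h_{n-t+1,n-t}$. Hence the first iteration producing a zero diagonal can do so only at position $n-t$.

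For the SIR claim I would inspect the state of $H$ just before and after the fateful Exchange of iteration $k$. Since $\beta=h_{n-t+1,n-t}=0$, the post-condition of the generalized Hermite reduction noted right after Definition \ref{def:Generalized-Hermite-reduction} forces the entire tail $h_{n-t+1,n-t},\ldots,h_{n,n-t}$ of column $n-t$ to vanish. Combined with lower-trapezoidality, this makes the $(n-t)$-th column of $H$ equal to $\alpha e_{n-t}$ with $\alpha=h_{n-t,n-t}\neq 0$; after the row swap it becomes $\alpha e_{n-t+1}$, and since $r=n-t$ the Corner leaves $H$ intact. I would then invoke the two invariants of the algorithm, both straightforward to verify by inspecting each update: $X^TH=\mathbf{0}$ (the operations $D$ and $R$ on $H$ are matched by $D^{-1}$ and $R$ on $X^T$, and the column operation $Q$ does not affect the product) and $X^T=X_0^TB$, where $X_0$ denotes the original input. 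Applying the first invariant to the column $\alpha e_{n-t+1}$ of $H$ yields $X^Te_{n-t+1}=\mathbf{0}$, and the second then gives $X_0^T\mathbf{B}_{n-t+1}=\mathbf{0}$, so $\mathbf{B}_{n-t+1}$ is the required SIR for $X$.

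The main subtlety I expect to handle carefully is that the Reduction sub-step of iteration $k$ runs with $h_{n-t,n-t}=0$, which is a zero divisor for reductions targeting the tail of column $n-t$. The cleanest approach is to treat the sub-step with source $j=n-t$ as skipped and verify, by a short bookkeeping on the row operations with source $j\leq n-t-1$, that column $n-t+1$ of the resulting reducing matrix $D$ stays equal to $e_{n-t+1}$; this implies $D^{-1}e_{n-t+1}=e_{n-t+1}$, so that the updates $B:=BD^{-1}$ and $X^T:=X^TD^{-1}$ leave both $\mathbf{B}_{n-t+1}$ and the $(n-t+1)$-th column of $X^T$ untouched. Moreover, the final row-swap sub-step inside the reduction does not trigger because the only nonzero tail entry of column $n-t$ already sits at the top. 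Together these observations guarantee that the SIR identified after the Exchange persists as $\mathbf{B}_{n-t+1}$ at the termination check of iteration $k$.
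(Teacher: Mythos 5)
Your proof is correct and follows essentially the same route as the paper: the Corner-step computation showing that an exchange with $r<n-t$ cannot create a zero diagonal (so the fateful iteration must have $r=n-t$ and $j=n-t$), and the observation that $\beta=0$ together with the row-exchange postcondition of the generalized Hermite reduction collapses column $n-t$ of $H(k-1)$ to a single nonzero diagonal entry, whence the invariant $X^THB^{-1}\cdots=\mathbf{0}$ (equivalently $X^TBH(k-1)=\mathbf{0}$) forces the corresponding column of $B$ to be an SIR. Your additional bookkeeping on the degenerate Reduction sub-step with $h_{n-t,n-t}=0$ addresses a detail the paper passes over silently, but it does not alter the argument.
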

\begin{proof}
By the hypothesis on $k$, all diagonal elements of $H(k - 1)$ are
not zero. Now, suppose the $r$ chosen in the Exchange step is not
$n-t$. Since generalized Hermite reduction does not introduce any
new zeros on the diagonal, and from the Exchange step and the Corner
step, we have that no diagonal element of $H(k)$ is zero. This
contradicts the hypothesis on $k$, and hence that our assumption
that $r < n - t$ was false. Thus  $r = n - t$ after the $(k - 1)$-th
iteration.

Next we show that there must be an SIR for $X$ appeared as a column
of the matrix $B$. We have $X^TH_{X} = \mathbf{0}$ from Lemma
\ref{lem:property-of-Hx} and hence that  ${\mathbf 0} =
X^TBB^{-1}H_{X} = X^TBB^{-1}H_{X}Q = X^TBH(k - 1)$, where $Q$ is an
appropriate orthogonal $(n - t)\times(n - t)$ matrix. Let $({\mathbf
z}_1, \cdots, {\mathbb Z}_t)^T = X^TB$, where ${\mathbf
z}_i=(z_{i,1},\cdots,z_{i,n})^T$. Then
\begin{equation}\label{eq:correctness}
\begin{split}&\,\,\,\,\,\,\left[\begin{array}{lll}0&\cdots&0\\
\vdots&\ddots &\vdots\\
0&\cdots&0\end{array}
 \right] = X^TBH(k - 1) = \left[
 \begin{array}{c}{\mathbb Z}_1^T\\\vdots\\{\mathbb Z}_t^T\end{array}
 \right]H(k - 1)\\&= \left[\begin{array}{lc}\cdots,&\sum_{k=n-t}^{n}z_{1,k}h_{k,n-t}(k-1)\\\cdots,&\cdots
\\\cdots,&\sum_{k=n-t}^{n}z_{t,k}h_{k,n-t}(k-1)\end{array}
 \right ]= \left [\begin{array}{lc}\cdots,&z_{1,n-t}h_{n-t, n-t}(k - 1)\\\cdots,&\cdots\\
 \cdots,&z_{t,n-t}h_{n-t, n-t}(k - 1)
 \end{array}
 \right ].\end{split}
 \end{equation}

We know $h_{n-t+1, n-t}(k - 1) =0$ and $h_{n-t, n-t}(k - 1) \neq 0$
from $h_{n-t, n-t}(k) = 0$. From Definition
\ref{def:Generalized-Hermite-reduction}
 and $h_{n-t+1, n-t}(k - 1)
=0$ we have $h_{n-t+2, n-t}(k - 1) = \cdots = h_{n, n-t}(k - 1) = 0$
which implies the last equality in (\ref{eq:correctness}). Since
$h_{n-t, n-t}(k - 1) \neq 0$, it follows that $z_{1, n-t} = \cdots =
z_{t, n-t} = 0$. Thus the $(n - t)$-th column of $B$ is an SIR for
$X$.
\end{proof}

From the analysis above and Lemma \ref{lem:Hn,n-2=0}, the
correctness of Algorithm 1 is proved. From the iteration of
Algorithm 1, $\|H(k)\|_F$ is decreasing with respect to $k$. Thus if
there exist SIRs for $X$, Algorithm 1 can always find one.

\begin{defn}[$\Pi$ function]\label{def:Pi-function}
Let $\lambda(X)$ be the least norm of SIRs for $X$. For the $k$-th
iteration in Algorithm 1, define
\[
\Pi(k) = \prod_{1\leq j\leq n-t} \min\left\{\gamma^{n-t}\lambda(X),
\frac{1}{\left|h_{j,j}(k)\right|}\right\}^{n-j}.
\]
\end{defn}

\begin{lem}\label{lem:Pi-function}
For $k>1$ we have
\begin{enumerate}
\item

$\left(\gamma^{n-t}\lambda(X)\right)^{\left[\left(
                                                          \begin{array}{c}
                                                            n \\
                                                            2 \\
                                                          \end{array}
                                                        \right)-\left(
                                                          \begin{array}{c}
                                                            t \\
                                                            2 \\
                                                          \end{array}
                                                        \right)
\right]}\geq\Pi(k)\geq 1$.
\item

$\Pi(k)\geq \sqrt{\frac{4\gamma^2}{\gamma^2+4}}\, \Pi(k-1)$.
\end{enumerate}
\end{lem}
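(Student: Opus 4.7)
Set $M:=\gamma^{n-t}\lambda(X)$ and $\sigma:=\sqrt{4\gamma^2/(\gamma^2+4)}$; note $M\geq 1$ (as $\gamma>1$ and $\lambda(X)\geq 1$) and $\sigma>1$ (as $\gamma>2/\sqrt 3$). The upper bound in part~1 is immediate, since each of the $n-t$ factors is $\leq M^{n-j}$ and $\sum_{j=1}^{n-t}(n-j)=\binom{n}{2}-\binom{t}{2}$. For the lower bound in part~1 I will induct on $k$. The base $k=0$ follows from the Gram--Schmidt construction of Lemma~\ref{lem:property-of-Hx}: before normalization $b_{j,j}^{*}=\|\mathbf{b}_j^{*}\|_2^{2}\leq 1$, so after normalization $|h_{j,j}(0)|\leq 1$, and generalized Hermite reduction preserves the diagonal; hence each factor is at least $\min\{M,1\}^{n-j}=1$. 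The inductive step is an immediate consequence of part~2, since $\sigma\geq 1$.

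Part~2 is the real work. Only the diagonal entries $h_{r,r}$ and (if $r<n-t$) $h_{r+1,r+1}$ change in one iteration, because Corner acts only on rows $r,r+1$ and generalized Hermite reduction preserves the diagonal. Writing $\alpha=h_{r,r}$, $\lambda=h_{r+1,r+1}$, $\beta=h_{r+1,r}$, the Corner step yields $h_{r,r}^{\text{new}}=\delta:=\sqrt{\beta^{2}+\lambda^{2}}$ and $h_{r+1,r+1}^{\text{new}}=-\alpha\lambda/\delta$. Combining the reduction bound $|\beta|\leq|\alpha|/2$ with the Exchange rule $|\alpha|\geq\gamma|\lambda|$ gives the central estimate
\[
|\delta|\leq|\alpha|\sqrt{\tfrac{\gamma^{2}+4}{4\gamma^{2}}}=\frac{|\alpha|}{\sigma},
\]
which is exactly where the hypothesis $\gamma>2/\sqrt 3$ is needed. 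Setting $s=\log(|\alpha|/|\delta|)\geq\log\sigma$, $X=-\log|\alpha|$, $Y=-\log|\lambda|$, and $\phi(x)=\min\{\log M,x\}$, the change in $\log\Pi$ reduces to
\[
\Delta=(n-r)\bigl[\phi(X+s)-\phi(X)\bigr]+(n-r-1)\bigl[\phi(Y-s)-\phi(Y)\bigr].
\]

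Since $\phi$ is concave, $(a,b)\mapsto\phi(a)+\phi(b)$ is Schur-concave. Because $|\delta|\geq|\lambda|$ implies $X+s\leq Y$, both coordinates of $(X+s,Y-s)$ lie in $[X,Y]$ with the same sum as $(X,Y)$, hence $(X+s,Y-s)$ is majorized by $(X,Y)$ and $\phi(X+s)+\phi(Y-s)\geq\phi(X)+\phi(Y)$. Therefore $\Delta\geq\phi(X+s)-\phi(X)$. The main obstacle is to lower-bound this last quantity by $\log\sigma$, which requires controlling $X$ from above. I plan to adapt Theorem~1 of Ferguson--Bailey--Arno (PSLQ) to the present multi-vector setting, proving $\max_{1\leq j\leq n-t}|h_{j,j}(k)|\geq 1/\lambda(X)$ by writing any SIR as $H(k)\mathbf{c}$ and exploiting the lower-trapezoidal shape of $H(k)$. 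Combined with the Exchange rule $\gamma^{j}|h_{j,j}|\leq\gamma^{r}|\alpha|$ evaluated at the maximising $j$, this gives $|\alpha|\geq 1/(\gamma^{r-1}\lambda(X))$, i.e.\ $X\leq\log M-(n-t-r+1)\log\gamma$. A two-case split on whether $X+s\leq\log M$ then closes the estimate: in the first case $\phi(X+s)-\phi(X)=s\geq\log\sigma$; in the second, $\phi(X+s)-\phi(X)=\log M-X\geq(n-t-r+1)\log\gamma\geq\log\gamma\geq\log\sigma$, the last inequality because $\gamma\geq\sigma$. The boundary case $r=n-t$ (no Corner step) is handled analogously using only the reduction bound $|h_{r,r}^{\text{new}}|=|\beta|\leq|\alpha|/2$, giving a shift $\geq\log 2>\log\sigma$ in the single affected factor.
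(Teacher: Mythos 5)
The paper does not actually prove this lemma: it states it and defers to the $\Pi$-function analysis of Ferguson--Bailey--Arno, asserting only that "the routine of analyzing the number of iterations in \cite{FBA1999} can be carried over." Your proposal is, in substance, a correct reconstruction of that carried-over argument, so it follows the same route the paper intends. The computation of the new diagonal entries ($\delta$ and $-\alpha\lambda/\delta$), the estimate $\delta\le|\alpha|\sqrt{(\gamma^{2}+4)/(4\gamma^{2})}$ from $|\beta|\le|\alpha|/2$ and $|\lambda|\le|\alpha|/\gamma$, the invariance of the product $|h_{r,r}h_{r+1,r+1}|$ under the Exchange--Corner step, and the two-case split against the cap $\gamma^{n-t}\lambda(X)$ are exactly the PSLQ ingredients; your Schur-concavity packaging is a cleaner organization of what \cite{FBA1999} does by explicit case analysis. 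Two remarks. First, the one step you leave as a plan --- $\max_{j}|h_{j,j}(k)|\ge 1/\lambda(X)$ --- does adapt exactly as you indicate, and you should write it out since it is where the cap earns its keep: a minimal SIR $\mathbf m$ lies in $X^{\perp}=\mathrm{range}(H_X)$, so $\mathbf m=H_X\mathbf c$ with $\|\mathbf c\|_2=\|\mathbf m\|_2=\lambda(X)$ by part 1 of Lemma \ref{lem:property-of-Hx}; writing $H(k)=AH_XQ$ with $A$ unimodular and $Q$ orthogonal and letting $i_0$ be the first index with $c'_{i_0}\neq 0$ for $\mathbf c'=Q^{T}\mathbf c$, the lower-trapezoidal shape gives $1\le|(A\mathbf m)_{i_0}|=|h_{i_0,i_0}(k)|\,|c'_{i_0}|\le\max_{j}|h_{j,j}(k)|\,\lambda(X)$. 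Second, a small slip: the Corner matrix multiplies $H$ on the right and so acts on \emph{columns} $r$ and $r+1$, not rows; your conclusion that only the diagonal entries in positions $r$ and $r+1$ change is nevertheless correct, as is the claim that generalized Hermite reduction and its final row swaps (which involve only rows $n-t+1,\dots,n$) leave the diagonal of $H$ untouched.
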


The routine of analyzing the number of iterations in \cite{FBA1999}
can be carried over here with redefining the $\Pi$ function as
above, so we state Lemma \ref{lem:Pi-function} directly without
proof. From this lemma, it follows that the $\Pi$ function is
increasing with respect to $k$ and has an upper bound for a fixed
$\gamma\in(2/\sqrt{3},+\infty)$. From Definition
\ref{def:Pi-function} we can infer $\Pi(0)\geq 1$. And from Lemma
\ref{lem:Pi-function} we know that
\[\left(\gamma^{n-t}\lambda(X)\right)^{\left[\left(
                                                          \begin{array}{c}
                                                            n \\
                                                            2 \\
                                                          \end{array}
                                                        \right)-\left(
                                                          \begin{array}{c}
                                                            t \\
                                                            2 \\
                                                          \end{array}
                                                        \right)
\right]}\geq\Pi(k)\geq \left(\sqrt{\frac{4\gamma^2}{\gamma^2+4}}\,
\right)^k.\] Solving $k$ from this inequality gives the part 1 of
Theorem \ref{thm:Correctness}, as was to be shown.

\section{Empirical Study and Further Discussion}\label{sec:Empirical study}

\subsection{Implementation}

All discussions above are based on exact arithmetic operation, i.e.
uses the Blum-Shub-Smale model \cite{BSS1989, BCS1998} of
computation. The reason is that Algorithm 1 involves real numbers.
Thus Algorithm 1 can only be implemented using floating point
arithmetic on computer. Both Algorithm 1 and the HJLS simultaneous
relations algorithm when $t=2$, i.e. detecting an SIR for two real
vectors, were implemented in \emph{Maple} 13 under multiprecision
floating point arithmetic (one level scheme). Like PSLQ (see
\cite[Section 5]{BB2001}), it is possible to perform most iterations
using hardware floating point arithmetic, with only occasional
depending on multiprecision arithmetic.  So the two level
implementation of Algorithm 1 is also developed by the first author.
It partially uses software
floating arithmetic and mainly uses hardware floating arithmetic  (Cf. \cite[Section 5]{BB2001} for details). \footnote{ The package is available from {
\url{http://cid-5dbb16a211c63a9b.skydrive.live.com/self.aspx/.Public/sird.rar}}.}

It is well known that the Gram-Schmidt orthogonalization algorithm
is numerically unstable \cite{GL1989}, so in our implementations we
construct the hyperplane matrix by using QR decomposition, instead
of Gram-Schmidt orthogonalization. In contrast to HJLS, the
iteration step in Algorithm 1 is not based on the Gram-Schmidt
orthogonalization, but on LQ decomposition (this is equivalent to QR
decomposition). Householder transformations are used in our
implementations to compute these decompositions. Thus our
implementations of Algorithm 1 is numerically stable.

\subsection{Experimental Result} In theory, the cost of Algorithm 1 (in Corollary
\ref{cor:time-complexity-of-simultaneous-relation}) matches the best
complexity upper bound known for this problem (Cf. \cite[section
5]{HJL1989}), whereas in practice  Algorithm 1 usually needs fewer
iterations. For ${\mathbf x}_1 = (11,27,31)^T$ and ${\mathbf x}_2 =
(1,2,3)^T$, HJLS outputs $(19, -2, -5)^T$ after 5 iterations while
Algorithm 1 outputs the same SIR after only 3 iterations.

\begin{table}[h]
\centering
\begin{minipage}[c]{0.5\textwidth}
\centering
\begin{tabular}{|c|c|c|c|c|c|c|r|}
  \hline
No.&  $n$  & $itr_{HJLS}$ & $itr_{SIRD}$ & $t_{HJLS}$ & $t_{SIRD}$
 \\\hline
  1&4&15&8&\ \ 0.063&0. \\\hline
 2&4&13&6&\ \ 0.062&0. \\\hline
 3&4&21&11&\ \ 0.094&\ \ 0.015\\\hline
 4&5&25&12&\ \ 0.109&\ \ 0.016\\\hline
 5&5&27&7&\ \ 0.141&0.  \\\hline
 6& 5 &   21 & 10 & \ \ 0.094 &0. \\\hline
7& 30 &  51 & 7 & \ \ 0.922 & \ \ 0.125 \\\hline
  8&54 & 34 & 9 & \ \ 2.203 & \ \ 0.453 \\\hline
   9&79  & 34 & 5 & \ 4.860 & \ \ 0.625 \\\hline
  10&97 &  37 & 5 & \ 7.438 & \ \ 1.047 \\
  \hline
\end{tabular}
\end{minipage}%
\begin{minipage}[c]{0.5\textwidth}
\centering
\begin{tabular}{|c|c|c|c|c|c|c|c|}
  \hline
No.&  $n$  & $itr_{HJLS}$ & $itr_{SIRD}$ & $t_{HJLS}$ & $t_{SIRD}$
 \\\hline

  11&128 &  45 & 5 & \ 13.765 &  1.687 \\\hline
  12&149 &   29 & 2 & \ 19.016 & 1.610 \\\hline
   13&173 & 26 & 3 & \ 26.812 &  2.421 \\\hline
  14&192 &  29 & 5 &\  34.218 &  3.563 \\\hline
  15&278& 28 & 5 & \ 85.797 & 8.860  \\\hline
   16&290 & 35 & 4 & \ 95.656 & 8.328  \\\hline
   17&293 & 23 & 4 & \ 98.062 & 8.750  \\\hline
   18&305 & 22 & 3 &  109.187& 8.063  \\\hline
   19&316 & 19 & 3 & 120.187 & 8.766  \\\hline
  20&325 & 18 & 2 & 129.031 & 6.953  \\
  \hline
\end{tabular}
\end{minipage}
\caption{Comparison of performance results for HJLS and Algorithm 1
}\label{tab:performance-of-SIRD}
\end{table}

The purpose of the trials in Table \ref{tab:performance-of-SIRD} is
to compare the performances of HJLS and Algorithm 1 when $t=2$.  All
of the tests were run on AMD Athlon$^{\tiny\mbox{TM}}$ 7750
processor (2.70 GHz) with 2GB main memory.

In Table \ref{tab:performance-of-SIRD}, $n$ gives the dimension of
the relation vector, $itr_{HJLS}$ and $itr_{SIRD}$ are the numbers
of iterations of HJLS and Algorithm 1 respectively, and the columns
headed $t_{HJLS}$ and $t_{SIRD}$ give the CPU run time respectively
of the two algorithms in seconds. The 20 trials in Table
\ref{tab:performance-of-SIRD} were constructed by \emph{Maple}'s
pseudo random number generator. The first $6$ trials are for low
dimension, and   others for higher dimension.

The results show that Algorithm 1 appears to be more effective than
HJLS. In all  $20$ trials, the number of iterations of Algorithm 1
is less than that of HJLS. It is still true that Algorithm 1 usually
needs fewer iterations than  HJLS for more tests. This leads that
the running time of Algorithm 1 is much less than HJLS. With the
dimension $n$ increasing, the difference between the efficiencies of
Algorithm 1 and HJLS is increasingly notable. On average, the
running time of Algorithm 1 is less than $1/10$ (based on the data
in Table \ref{tab:performance-of-SIRD}) of the running time of HJLS.

To some extent, the number of iterations and the cost of Algorithm 1
are related to the parameter $\gamma$. In practice we have found
that for many examples, larger values of $\gamma$ are more effective
in finding SIRs for $X$. For ${\mathbf x}_1=(86,  6, 8, 673)^T$ and
${\mathbf x}_2=(83, 5, 87, 91)^T$, if we choose $\gamma=2$ then
Algorithm 1 outputs $(-32, -747, 63, 10)^T$ after 10 iterations,
however, if we choose $\gamma=93$, Algorithm 1 outputs $(-35, -2624,
157, 26)^T$ after only 7 iterations. It is worth mentioning that, in
the example above, both of the two different output vectors are SIRs
for $({\mathbf x}_1,{\mathbf x}_2)$ and they are linearly
independent and hence that all SIRs for $({\mathbf x}_1,{\mathbf
x}_2)$ can be obtained from them. Going on  choosing a $\gamma$
larger than $93$ in this example, after many tests, the authors find
that the number of iterations is always 7, and it will not decrease
any more. Based on this observation, all results in Table
\ref{tab:performance-of-SIRD} are obtained under the condition that
$\gamma=1000$.

In general, a larger $\gamma$ requires a higher precision in
authors' tests. Usually, a high precision leads a large height of
the output SIR and a large cost of memory. If one sets $\gamma =
1.15470053838$ $\left(>2/\sqrt{3}\right)$, for about $60\%$ of our
whole tests, the height of the vector returned by Algorithm 1 is
less than that of HJLS simultaneous relations algorithm, however the
number of iterations turns large.

This means that we should try to find the balance between the number
of iterations and the precision because both of them are relevant to
the running time of Algorithm 1. So in practice, what are the best
choices for $\gamma$ needs further exploration.
\subsection{An Application}
 We end this paper with an application of Algorithm 1 to
find the minimal polynomial of a complex algebraic number from its
approximation.

\begin{exmp}
Let $\alpha = 2 + \sqrt{3}I$. We know that the minimal polynomial of
$\alpha$ in ${\mathbb Z}[x]$ is $7-4x+x^2$. Let $\bar{\alpha} =
2.000 + 1.732 I$ be an approximation to $\alpha$ with four
significant digits. Let ${\mathbf v}_1 = (1., 2., 1.)^T$ and
${\mathbf v}_2 = (0., 1.732, 6.928)^T$ be the real part and the
imaginary part of $(1, \,\bar{\alpha},\,\bar{\alpha}^2)^T$
respectively. Feeding Algorithm 1 with ${\mathbf v}_1$, ${\mathbf
v}_2$ as its input vectors gives an SIR for ${\mathbf v}_1$,
${\mathbf v}_2$ after $2$ iterations. The corresponding matrices $B$
are
\[
 \left[ \begin {array}{rrr} 2&1&\,\,0\\\noalign{\medskip}-1&-1&0
\\\noalign{\medskip}0&0&1\end {array} \right],
\left[ \begin {array}{rrr}
7&\,\,0&2\\\noalign{\medskip}-4&0&-1\\\noalign{\medskip}1&1&0\end
{array} \right].
\]

It is obvious that the first column of the latter one is an SIR for
${\mathbf v}_1$ and ${\mathbf v}_2$ and corresponds to the
coefficients of the minimal polynomial of $\alpha$. However, if one
takes only $3$ significant digits for the same data, after $3$
iterations Algorithm 1 outputs $(1213,-693,173)^T$, which is an
exact SIR for $(1., 2., 1.)^T$ and $(0., 1.73, 6.93)^T$, but does
not correspond to the coefficients of the minimal polynomial of
$\alpha$. For this reason, how to appropriately control the error
also is an interesting problem.
\end{exmp}

Generally, for computing the minimal polynomial of an algebraic
number $\alpha$ with degree $n$, we detect an integer relation for
$(1,\alpha,\cdots,\alpha^n)^T$. If $\alpha\in\mathbb{C}$, we detect
an SIR for $(1,\mbox{Re}(\alpha),\cdots,\mbox{Re}(\alpha^n))^T$ and
$(0,\mbox{Im}(\alpha),\cdots,\mbox{Im}(\alpha^n))^T$ by Algorithm 1
under a proper decimal precision. The output vector corresponds to a
polynomial of degree $n$, whose primitive part must be the minimal
polynomial of $\alpha$.

As mentioned early, Algorithm 1 has been implemented in two schemes
(one-level, two-level). Using our two level implementation of
Algorithm 1, the authors obtain the following polynomial of degree
84 from an approximation to $\alpha=3^{1/6}-2^{1/7}I$ with
\texttt{Digits:=1300} in \emph{Maple} 13. It is easy to check that
this polynomial is the exact minimal polynomial of $\alpha$.
\[\begin{array}{ll}
&5067001+783962907\,{x}^{36}+21027764272536\,{x}^{40}-7504504\,{x}^{42}
\\+&83639618394696\,{x}^{34}-36683081862336\,{x}^{38}+770305668258672\,{x
}^{32}\\
+&142394998636968\,{x}^{28}+1254656434122\,{x}^{30}+1370000831472
\,{x}^{10}\\
+&34207465357611\,{x}^{12}-284692059376032\,{x}^{14}+
24758141678424\,{x}^{16}\\+&190959510258972\,{x}^{18}+2306173886216928\,{
x}^{20}+99120704967648\,{x}^{22}\\+&64111149001809\,{x}^{24}-
3029254676588448\,{x}^{26}+250312437648\,{x}^{44}\\
+&2189187\,{x}^{48}-
112615776\,{x}^{50}-486486\,{x}^{54}+81081\,{x}^{60}-378550368\,{x}^{2
}\\+&11935794528\,{x}^{4}+190431110646\,{x}^{6}+3293025660288\,{x}^{8}+{x
}^{84}\\+&88074554904\,{x}^{52}+240\,{x}^{56}+1041237288\,{x}^{58}+
1952496\,{x}^{64}-9828\,{x}^{66}\\+&24\,{x}^{70}+819\,{x}^{72}-42\,{x}^{
78}+2212809521832\,{x}^{46}
\end{array}
\]

Some performance results are reported in Table
\ref{tab:TLSIRD-VS-SIRD}. In this table, $r$ and $s$ ($s$ is an odd
integer number) define the constant $\alpha=3^{1/r}-2^{1/s}I$, which
is an algebraic number of degree $2rs$, and $n=2rs+1$. The column
headed ``\texttt{Digits}'' gives a sufficient precision in decimal
digits, while $itr$ and $t$ are the number of iterations and running
times required for the correct output  respectively, where the
suffix SIRD is for one-level and TLSIRD for two-level. Every output
vector in Table \ref{tab:TLSIRD-VS-SIRD} corresponds to the
coefficients of the exact minimal polynomial of $\alpha$'s.
\begin{table}\centering
\begin{tabular}{|c|c|c|c|c|r|c|r|}
  \hline
$r$& $s$&  Dim. $n$  & \texttt{Digits} & $itr_{SIRD}$ &$t_{SIRD}\ \
$ & $itr_{TLSIRD}$ & $t_{TLSIRD}$
 \\\hline
4&3&25&100& 5685&75.937&5611&8.125\\\hline
3&5&31&150&11792&356.890&11792&28.157\\\hline
6&3&37&300&18927&556.031&18993&73.109\\\hline
4&5&41&350&26600&942.516&26192&134.360\\\hline
5&5&51&400&50084&2432.672&49738&440.110\\\hline
6&5&61&550&84677&6422.079&81758&1267.985\\\hline 4&9&73&1000& & &
159326&4889.922\\ 6&7&85&1300& & & 234422&10658.735\\\hline
\end{tabular}
\caption{Running times for the two implementations of Algorithm
1}\label{tab:TLSIRD-VS-SIRD}
\end{table}

From Table \ref{tab:TLSIRD-VS-SIRD}, it can be seen that the
two-level program is much faster than the one-level program since
the two-level program  only involves multiprecision operation
partially. The reason is that
 not all operations of Algorithm 1 have
to use high precision. As a matter of fact, the two-level program
performs most of the iterations using IEEE hardware arithmetic. Thus
the running times can be dramatically reduced.

Using the case of $t=1$ (it is PSLQ in fact) and $t=2$ of Algorithm
1 for real and complex algebraic numbers, respectively, we get a new
complete method (The method in \cite{QFC2009} is only for real
algebraic numbers.) to recover the minimal polynomial of an
arbitrary algebraic number from its approximation and degree.  It
should be noted that since this method depends on integer relation
detection that is based on a generalization of Euclidean algorithm
\cite{FF1979}, it is different from LLL-based algorithms, such as
\cite{KLL1988,HN2010}.

In practice, the  presented method is efficient. For
$\alpha:=\sqrt[2]{21}+\sqrt[3]{43}I$, our  procedure
\texttt{MiniPoly} takes $1.062$ seconds for outputting the exact
minimal polynomial of $\alpha$, whereas the \emph{Maple} built-in
function \texttt{MinimalPolynomial} in \texttt{PolynomialTools}
package that is LLL-based takes $6.032$ seconds under the same
decimal precision \texttt{Digits:=500}.

\section{Conclusion}
Using a method to construct a \emph{hyperplane matrix} and \emph{the
generalized Hermite reduction}, a new SIRs detecting algorithm,
Algorithm 1, is presented in this paper. It runs faster than the
HJLS simultaneous relation algorithm through the authors'
\emph{Maple} package. Applying the algorithm, we obtain a complete
method to find the minimal polynomial of an approximately algebraic
number, which is even faster than the corresponding \emph{Maple}
built-in function.

\ \\ {\bf Acknowledgements.} This research was partially supported
by the Knowledge Innovation Program of Cinese Academy of Sciences
(KJCX2-YW-S02), the National Natural Science Foundation of China
(10771205) and the National Basic Research Program of China
(2011CB302400).
%
%

\end{document}